\theoremstyle{plain}
\newtheorem{theorem}{Theorem}[section]
\newtheorem{lemma}{Lemma}[section]
\newtheorem{proposition}{Proposition}[section]
\newtheorem{corollary}{Corollary}[section]
\theoremstyle{remark}
\newtheorem{remark}{Remark}
\newcommand{\Pro}{\mathrm{P}}
\newcommand{\Qro}{\mathrm{Q}}
\newcommand{\Exp}{\mathrm{E}}
\newcommand{\cL}{\mathcal{L}}
\newcommand{\cF}{\mathcal{F}}
\newcommand{\bb}{\boldsymbol{b}}
\DeclareMathOperator*{\argmax}{arg\,max\;}
\newcommand{\btheta}{\boldsymbol{\theta}}
\newcommand{\bL}{\boldsymbol{L}}
\newcommand{\cS}{\mathcal{S}}
\newcommand{\Alt}{\operatorname{Alt}}
\newcommand{\Err}{\operatorname{Err}}
\begin{document}

\title{Signal Detection under Composite Hypotheses with Identical Distributions for Signals and for Noises}

\author{Yiming Xing,~\IEEEmembership{Member, IEEE}, 
Anamitra Chaudhuri,~\IEEEmembership{Member, IEEE}, 
and Yifan Chen,~\IEEEmembership{Member,~IEEE}
\thanks{This paragraph of the first footnote will contain the date on which you submitted your paper for review.}
\thanks{Due to page limit, the Supplementary Material is posted at \cite{SPL_supplement}.}
\thanks{
Yiming Xing is with the School of Mathematical Sciences, Tongji University, Shanghai, China (email: yimingx4@tongji.edu.cn).
}
\thanks{Anamitra Chaudhuri is with the Department of Statistics, Texas A\&M University, Texas, USA (e-mail: ac27@tamu.edu).}
\thanks{Yifan Chen is with the Department of Computer Science, Hong Kong Baptist University, Hong Kong, China (e-mail: yifanc@hkbu.edu.hk).
}
}

\markboth{Journal of \LaTeX\ Class Files, Vol. 14, No. 8, August 2015}
{Shell \MakeLowercase{\textit{et al.}}: Bare Demo of IEEEtran.cls for IEEE Journals}
\maketitle

\begin{abstract}
In this paper, we consider the problem of detecting signals in multiple, sequentially observed data streams, where the distribution of each stream lies in one of two common composite spaces, depending on whether it is a signal or a noise.
For this problem, we study a practical yet underexplored setting where it is a priori known that all signals have an identical distribution and so do all noises. 
Compared to the general setting where local distributions are free to take any values, this structure facilitates faster decision-making thanks to a smaller joint distribution space. 
However, it introduces additional challenges to the analysis of problem and design of tests, since the local distributions are now coupled. 
{\color{blue}
In this paper, we first establish a universal lower bound on the minimum expected sample size, which characterizes the essential difficulty of the problem and involves constants that are neither the minimum Kullback-Leibler divergences between the signal/noise distribution to the noise/signal distribution space, which appear in the lower bound for the general setting, nor the Kullback-Leibler divergences between the signal distribution and the noise distribution.}
Besides, we propose a test that controls the two types of familywise error rates below arbitrary levels, and achieves the minimum expected sample size asymptotically as the levels go to zero. 
{\color{blue}
Numerical studies are presented to compare with the state-of-the-art test for the general setting and demonstrate robustness against model misspecification.
}


\end{abstract}

\begin{IEEEkeywords}
    Asymptotic optimality, composite hypotheses, sequential multiple testing, signal detection, structured hypotheses.
\end{IEEEkeywords}

\IEEEpeerreviewmaketitle

\vspace{-1ex}
\section{Introduction}
Detecting signals over multiple data streams that are observed sequentially in real-time is a fundamental problem in signal processing and its related fields.
For example, in air defense systems, we aim to detect missile intrusions in multiple areas \citep{Ref4MissleDet_2004}; 
in the development of precision medicines, we aim to identify effective targets across various positions
\citep{Bartroff_Book_Clinicaltrials}; in spectrum sensing for cognitive radio, we aim to find vacant channels \citep{spectrumsensing_2012}; in financial markets, we aim to monitor frauds \cite{Ref4FraudDet_2022}, etc.
If the characteristics of signals are specified as alternative hypotheses and those of non-signals, i.e., noises, as null hypotheses, then such a problem can be naturally formulated as a sequential multiple testing problem.

Such a problem has been studied in \cite{Malloy_Nowak_2014, Kobi_2015_active, Kobi_2015_AO, Kobi_2018_heterogeneous, Kobi_2019_nonlinear, Kobi_2022_switching} and \cite{Song_prior, PaperII, Aris_IEEE, Aris_TIT2025, ITW2024} where the hypotheses are \emph{simple}, i.e., the distributions of each stream under the null and the alternative hypotheses 
are fully specified, and in \cite{Kobi_2020_composite, Kobi_2023_hiera} and \cite{Song_AoS, PaperIII, chaudhuri2024joint, chaudhuriISIT2024} where the hypotheses are \emph{composite}, i.e., the distributions of each stream are only specified up to an unknown parameter.

{\color{blue}
In most works considering composite hypotheses, the local parameter in each stream is allowed to take any values within its parameter space.
However, there are numerous scenarios where all signals share a common parameter and all noises share another. 
A typical example is the post-change identification problem, where all streams are initially in a common normal state, an abrupt event changes a subset of them to a common abnormal state, and it is of interest to identify this subset reliably and efficiently.
Based on the authors' knowledge, this very realistic setting has been considered only in \citep[Section III-C]{Kobi_2020_composite}, where a test was proposed and was only shown to be consistent in the sense that its familywise misclassification rate decays polynomially with its threshold (in the scale of log-likelihood ratio). 

Our contributions of this work are the rigorous formulation of this setting, and the proposal of a novel test tailored for it.
For the proposed test, we show that
(i) its two types of familywise error rates decay exponentially with its thresholds, and we design a universal selection of the thresholds so that the error rates can be controlled below arbitrary, user-specified levels, and 
(ii) it is asymptotically optimal, in the sense that its expected sample size achieves the infimum among all tests that control the error rates below the same levels, asymptotically as the levels go to zero.

Numerical studies are presented in both the main text and the supplement to illustrate the properties of the proposed test. An extension to various other error metrics, all proofs and some supporting lemmas are also presented in the supplement.
}

\vspace{-1ex}
\section{Problem formulation} \label{section: problem formulation}
Let $\{X_k(n),\,n\geq 1\}$, $k\in[K]\equiv\{1,\ldots,K\}$ be $K\geq 1$ independent data streams, each comprising i.i.d. data.
Suppose that the local distributions of all streams 
belong to the same parameter family, with densities belonging to  $\{f_\theta,\,\theta\in\Theta\}$ with respect to certain $\sigma$-finite measure $\nu$.
For any $k\in[K]$, we call stream $k$ as a \textit{noise} (resp. \textit{signal}) if the corresponding local parameter $\theta_k \in \Theta^0$ (resp. \textit{$\Theta^1$}),
where $\Theta^0$ and $\Theta^1$ are disjoint non-empty subsets forming a partition of $\Theta$.
We denote by $\btheta = (\theta_1, \dots, \theta_K)$ the joint parameter, 
by $\Theta^K$ the joint parameter space and by $\cS$ the structured subset of the joint parameter space where the local parameters of all noises and of all signals are identical, respectively, i.e., 
\begin{equation*}
\begin{gathered}
    \cS \equiv \{\btheta\in\Theta^K: \;
    \exists\;A\subseteq[K],\; \theta^0\in\Theta^0,\;\theta^1\in\Theta^1, \\
    \text{such that } \theta_k=\theta^0 \text{ for } k\in[K]\backslash A \text{ and } \theta_k=\theta^1 \text{ for } k\in A\}.
\end{gathered}
\end{equation*}


The problem of interest is, based on data that are sampled sequentially in time, to identify the subset of signals with desired reliability and as quickly as possible.
In order to do this, we need to specify a random time $T$ and a random set $D\subseteq[K]$ so that after taking $T$ samples in each stream, we stop sampling and declare $D$ as the subset of signals.
It is natural to require that when to stop sampling and which streams to select as signals should be based only on the already sampled data.
Mathematically, by denoting the data filtration by $\cF\equiv\{\cF(n),\,n\geq 1\}$ where $\cF(n) \equiv \sigma(X_k(t),\,1\leq t\leq n,\,k\in[K])$, this is achieved by requiring $T$ be a stopping time with respect to $\cF$ and $D$ be $\cF(T)$-measurable, i.e., for any $n\geq 1$ and $B\subseteq[K]$, $\{T\leq n\}, \, \{T\leq n,\,D=B\}\in\cF(n)$.
We refer to such a tuple $(T,D)$ as a test and denote by $\Delta$ the family of all tests.

For any $\btheta\in\Theta^K$, denote by $A(\btheta)$ the subset of streams with signals, i.e., $A(\btheta) \equiv \{k\in[K]: \theta_k\in\Theta^1\}$.
Therefore, for any $(T,D)\in\Delta$, $D\backslash A(\btheta)$ represents the subset of streams that are noises but are misidentified as signals, i.e., where type-I errors are made, and $A(\btheta)\backslash D$ represents the subset of streams that are signals but are misidentified as noises, i.e., where type-II errors are made. 
We are interested in controlling the probabilities of both types of errors.
Specifically, for any $\alpha,\beta\in(0,1)$, we denote by $\Delta(\alpha,\beta)$ the subfamily of tests that terminate almost surely and control the two types of familywise error rates below $\alpha,\beta$, respectively, under every possible distribution, i.e.,
\begin{equation*}
\begin{gathered}
    \Delta(\alpha,\beta) \equiv \{
    (T,D)\in\Delta: \;
    \Pro_\btheta(T<\infty) = 1, \\
    \Pro_\btheta(D\backslash A(\btheta)\neq\emptyset) \leq \alpha, 
    \Pro_\btheta(A(\btheta)\backslash D\neq\emptyset) \leq \beta, \; \forall\; \btheta\in\cS
    \}.
\end{gathered}
\end{equation*}
Under these constraints, our goal is to minimize the expected sample size under every possible distribution, i.e., to achieve
\begin{equation*}
    \cL_\btheta(\alpha,\beta) \equiv \inf\{\Exp_\btheta[T]: (T,D)\in\Delta(\alpha,\beta)\}, \;\forall\;\btheta\in\cS,
\end{equation*}
to a first-order asymptotic approximation as $\alpha,\beta\to 0$.

\begin{remark}
To facilitate later comparisons, we also introduce analogous notations for the general, unstructured setting.
Specifically, for any $\alpha,\beta\in(0,1)$, we denote by $\tilde{\Delta}(\alpha,\beta)$ the same as $\Delta(\alpha,\beta)$ with the only difference that the conditions hold for all $\btheta\in\Theta^K$, and, for any $\btheta\in\Theta^K$, denote by $\tilde\cL_\btheta(\alpha,\beta)$ the same as $\cL_\btheta(\alpha,\beta)$ with the only difference that the infimum is taken with respect to all $(T,D)\in\tilde\Delta(\alpha,\beta)$.
\end{remark}

\section{Universal lower bound} \label{section: ALB}
\subsection{Notations and assumptions}
For any $\theta,\theta'\in\Theta$, denote by 
\begin{equation*}
    I(\theta,\theta') \equiv \int f_{\theta} \log\frac{f_{\theta}}{f_{\theta'}} d\nu
\end{equation*}
the Kullback-Leibler (KL) divergence between $\theta$ and $\theta'$, i.e., between $f_\theta$ and $f_{\theta'}$.
Assume that the two parameter spaces, $\Theta^0$ and $\Theta^1$, are separated, in the sense that 
\begin{equation*}
\begin{aligned}
    I(\theta^0,\Theta^1) & \equiv \inf_{\theta\in\Theta^1} I(\theta^0,\theta) > 0, && \forall\; \theta^0\in\Theta^0, \\
    I(\theta^1,\Theta^0) & \equiv \inf_{\theta\in\Theta^0} I(\theta^1,\theta) > 0, && \forall\; \theta^1\in\Theta^1.
\end{aligned}
\end{equation*}
Besides, for any joint parameters $\btheta,\btheta'\in\Theta^K$, denote by $I(\btheta,\btheta')$ the KL divergence  between $\btheta$ and $\btheta'$, i.e., between the joint distributions $f_\btheta=\prod_{k\in[K]}f_{\theta_k}$ and $f_{\btheta'}=\prod_{k\in[K]}f_{\theta'_k}$. 
Due to independence across streams, we have 
$I(\btheta,\btheta') = \sum_{k\in[K]} I(\theta_k,\theta'_k)$.


\subsection{Universal lower bound}
{\color{blue} Define the function 
$$\varphi(x,y) \equiv x\log(x/(1-y)) + (1-x)\log((1-x)/y)$$
for $x,y\in(0,1)$ that $x+y<1$, which is decreasing in both arguments and $\sim|\log y|$ as $x,y\to 0$. 
This function is common in the universal lower bounds for sequential problems, due to the application of an information-theoretical inequality.
See \citep[Chapter 3.2.1]{Tartakovsky_Book}, \citep[Section 2]{OAI} and \citep[Section 5]{Song_prior} for more examples.}

Besides, 
for any $A\subseteq[K]$,
we denote by
\begin{equation*}
\begin{aligned}
    \Alt^0(A) & \equiv \{\btheta'\in\cS: A(\btheta')\backslash A\neq\emptyset \}, \\
    \Alt^1(A) & \equiv \{\btheta'\in\cS: A\backslash A(\btheta')\neq\emptyset \},
\end{aligned}
\end{equation*}
the subset of the structured parameter space which makes type-I (resp. II) errors relative to $A$,
{\color{blue}
or equivalently, relative to which $A$ makes type-II (resp. I) errors,
where $\Alt$ stands for ``alternative".
Note that, although the former interpretation is more straightforward, the latter is more precise because the principle of this problem (and probably of all hypothesis testing problems) is that, we select $A$ after making sure that the risk of making type-II (resp. I) errors if the truth is in $\Alt^0(A)$ (resp. $\Alt^1(A)$) is low.}
Finally, for any $\btheta\in\cS$, we denote by
\begin{equation*}
\begin{aligned}
    I^i(\btheta) & \equiv \inf_{\btheta'\in\Alt^i(A(\btheta))} I(\btheta,\btheta') \text{ for } i\in\{0,1\}, 
\end{aligned}
\end{equation*}
the minimum distances between $\btheta$ and $\Alt^i(A(\btheta))$, measured by KL divergence, with the convention that the infimum over an empty set of non-negative numbers is $+\infty$. 
{\color{blue}
\begin{remark}
    Similarly, we denote $\tilde\Alt^i(A)$ as $\Alt^i(A)$ with $\btheta'\in\Theta^K$, and $\tilde I^i(\btheta)$ as $I^i(\btheta)$ with $\btheta'\in\tilde\Alt^i(A(\btheta))$. 
\end{remark}}

\begin{theorem} \label{theorem, LB}
    For every $\btheta\in\cS$ and $\alpha,\beta\in(0,1/2)$ that $\alpha+\beta<1/2$,
    \begin{equation*} \label{LB}
    \begin{aligned}
        \cL_\btheta(\alpha,\beta) \geq \max \Bigg\{ \frac{\varphi(\alpha+\beta,\beta)}{I^0(\btheta)}, \frac{\varphi(\alpha+\beta,\alpha)}{I^1(\btheta)} \Bigg\}.
    \end{aligned}   
    \end{equation*}
    Therefore, as $\alpha,\beta\to 0$,
    \begin{equation*} \label{asy LB}
        \cL_\btheta(\alpha,\beta) \gtrsim \max\left\{ \frac{|\log\beta|}{I^0(\btheta)}, \frac{|\log\alpha|}{I^1(\btheta)} \right\}.
    \end{equation*}
\end{theorem}

    From the literature of sequential multiple testing with no structure on the joint parameter, e.g., \cite{Song_prior}, we know that, for any $\btheta\in\Theta^K$, as $\alpha,\beta\to 0$,
    the optimal performance
    \begin{equation} \label{tilde Lbtheta}
        \tilde\cL_\btheta(\alpha,\beta)\sim\max \left\{ \frac{|\log\beta|}{\tilde I^0(\btheta)}, \frac{|\log\alpha|}{\tilde I^1(\btheta)} \right\},
    \end{equation}
    where
    \begin{equation*}
        \tilde I^0(\btheta) = \min_{k\in[K]\backslash A(\btheta)} I(\theta_k,\Theta^1), \;\;
        \tilde I^1(\btheta) = \min_{k\in A(\btheta)} I(\theta_k,\Theta^0).
    \end{equation*}
    By definition, it is clear that $I^0(\btheta) \geq \tilde I^0(\btheta)$ and $I^1(\btheta) \geq \tilde I^1(\btheta)$ for all $\btheta\in\cS$.

{\color{blue}
\subsection{Comparison}
In this subsection, we draw sketches and simplify the expressions in the Gaussian case to provide more intuitions about the comparison between $I^1(\btheta)$ and $\tilde I^1(\btheta)$.
In Fig.~\ref{Fig: sketches} we draw four joint parameters, whose meanings are explained in the caption.
\begin{figure}[htbp]
    \centering
    
    \begin{subfigure}[b]{0.49\linewidth}
        \centering
        \includegraphics[width=\linewidth]{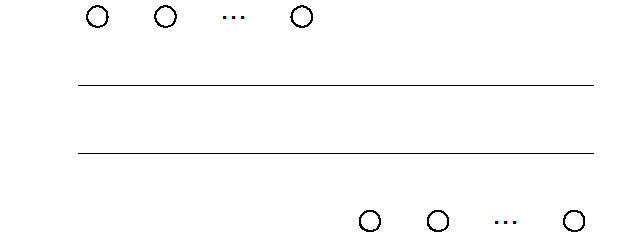}
        \subcaption{}
        \label{subfig:1}
    \end{subfigure}
    \hfill
    \begin{subfigure}[b]{0.49\linewidth}
        \centering
        \includegraphics[width=\linewidth]{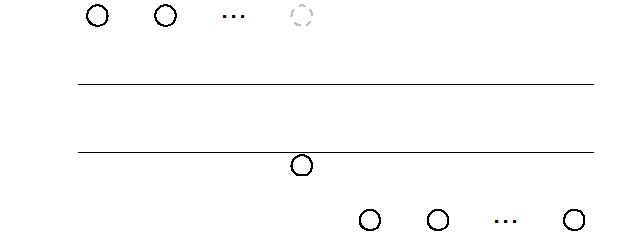}
        \subcaption{}
        \label{subfig:2}
    \end{subfigure}
    
    \vspace{5mm}
    
    \begin{subfigure}[b]{0.49\linewidth}
        \centering
        \includegraphics[width=\linewidth]{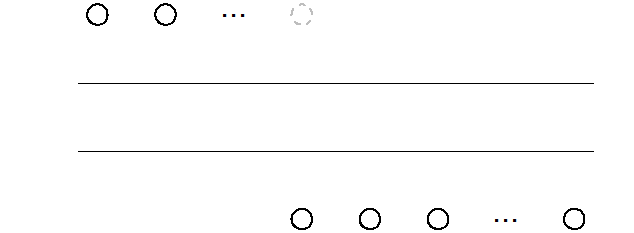}
        \subcaption{}
        \label{subfig:3}
    \end{subfigure}
    \hfill
    \begin{subfigure}[b]{0.49\linewidth}
        \centering
        \includegraphics[width=\linewidth]{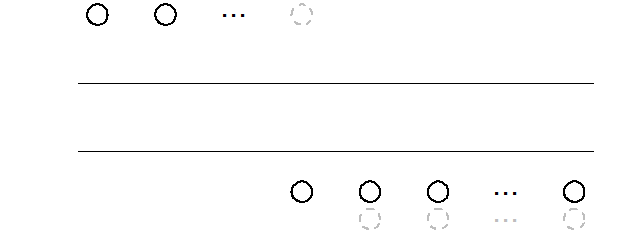}
        \subcaption{}
        \label{subfig:4}
    \end{subfigure}
    
    \caption{
    \textcolor{blue}{
    Sketches of four joint parameters.
    In each sketch, vertically, the region above the upper line denotes $\Theta^1$, and the region below the lower line denotes $\Theta^0$;
    horizontally, each dot corresponds to a local parameter.
    Specifically, (a) represents the true $\btheta\in\cS$, where the local parameters on the left are equal to $\theta^1\in\Theta^1$ and the rest on the right equal to  $\theta^0\in\Theta^0$.
    (b) represents the element in $\Theta^K$ that moves one $\theta^1$ to the boundary of $\Theta^0$, which achieves $\tilde I^1(\btheta)$.
    (c) represents the element in $\cS$ that moves one $\theta^1$ to $\theta^0$, which is easy to be misjudged as the one that achieves $I^1(\btheta)$.
    (d) represents the true element in $\cS$ that achieves $I^1(\btheta)$, which moves one $\theta^1$ and all $\theta^0$'s to a common point in $\Theta^0$ that is closer to the boundary than $\theta^0$ is.}
    }
    \label{Fig: sketches}
\end{figure}

If we denote the joint parameter in Fig.~\ref{Fig: sketches}.(a) as $\btheta=(\theta^1,\ldots,\theta^1,\theta^0,\cdots,\theta^0)\in\cS$, then the KL divergences between $\btheta$ and the joint parameters in Fig.~\ref{Fig: sketches}.(b)(c)(d) are $\tilde I^1(\btheta)=I(\theta^1,\Theta^0)$,
$I(\theta^1,\theta^0)$, and 
\begin{equation*}
    I^1(\btheta)=\inf_{\theta\in\Theta^0} \{I(\theta^1,\theta) + (K-|A(\btheta)|) I(\theta^0,\theta)\},
\end{equation*}
respectively.
It is clear that
$\tilde I^1(\btheta)$ does not satisfy the structure of $\cS$ unless $A(\btheta)=[K]$ or $\theta^0$ is on the boundary of $\Theta^0$, 
and that
$I(\theta^1,\theta^0)\geq I^1(\btheta)$ since $\theta=\theta^0$ is feasible for the minimization problem defining $I^1(\btheta)$.

When $\theta$ represents the mean parameter of the Gaussian distribution with unit variance, so that $I(\theta,\theta')=(\theta-\theta')^2/2$, 
$I^1(\btheta)$ can be specified as
\begin{equation*}
\begin{aligned}
     & \; \inf_{\theta\in\Theta^0} \left\{ \frac{1}{2}(\theta^1-\theta)^2 + (K-|A(\btheta)|)\frac{1}{2}(\theta^0-\theta)^2 \right\} \\
    = & \; \inf_{\theta^0+\delta\in\Theta^0}
    \left\{ \frac{1}{2}(\theta^1-\theta^0-\delta)^2 + (K-|A(\btheta)|)\frac{1}{2}\delta^2 \right\},
\end{aligned}
\end{equation*}
so
$I(\theta^1,\theta^0)-I^1(\btheta)$
is equal to 
\begin{equation*}
\begin{aligned}
    & \; \sup_{\theta^0+\delta\in\Theta^0}\left\{(\theta^1-\theta^0)\delta-(K-|A(\btheta)|+1)\frac{\delta^2}{2}\right\} \geq 0,
\end{aligned}
\end{equation*}
with equality if and only if $\theta^0$ is on the boundary of $\Theta^0$. 
}

\section{The proposed test} \label{section: the proposed test}
\subsection{Notations}
For any $k\in[K]$ and $n\geq 1$, denote by 
$L_k(n;\theta)\equiv \prod_{t=1}^n f_\theta(X_k(t))$ 
the likelihood function for the local parameter in stream $k$ based on its first $n$ data,
by 
\begin{equation*}
    \hat\theta_k(n) \equiv \argmax\{L_k(n;\theta): \theta\in\Theta\}
\end{equation*}
the maximum likelihood estimator (MLE) in $\Theta$,
{\color{blue}
by
\begin{equation*}
\begin{aligned}
    L_k^i(n) & \equiv \sup\{ L_k(n;\theta): \theta\in\Theta^i \} \text{ for } i\in\{0,1\} \\
\end{aligned}
\end{equation*}
the maximum likelihoods in $\Theta^i$,
and by
\begin{equation*}
    \hat A(n) \equiv \{k\in[K]: L_k^1(n) \geq L_k^0(n)\}
\end{equation*}
an estimator of the subset of signals.
}

We define the \emph{adaptive joint likelihood} as $\hat\bL(0)\equiv 1$ and,
for any $n\geq 1$, 
\begin{equation*} 
    \hat\bL(n) \equiv \hat\bL(n-1) \cdot \prod_{k\in[K]} f_{\hat\theta_k(n-1)}(X_k(n)),
\end{equation*}
where $\hat\bL(n)/\hat\bL(n-1)$ is the joint density of the data at time $n$, $\{X_k(n),\,k\in[K]\}$, under the parameters $\{\hat\theta_k(n-1),\,k\in[K]\}$ that are estimated based on the data up to time $n-1$, and $\{\hat\theta_k(0),\,k\in[K]\}$ are arbitrary initializations.
Besides, denote by $\bL(n;\btheta) \equiv \prod_{k\in[K]} L_k(n;\theta_k)$
the likelihood function for the joint parameter,
and define
\begin{equation} \label{genl}
\begin{aligned}
    \bL^i(n) \equiv \sup\left\{ \bL(n;\btheta'):\btheta'\in\Alt^i(\hat A(n)) \right\}
\end{aligned}
\end{equation}
for $i\in\{0,1\}$
as the \emph{maximum joint likelihoods} with respect to the subset of $\cS$ under which type-II (resp. I) errors would be made if the test decided on $\hat A(n)$. 

{\color{blue}
\begin{remark}
    Similarly, we denote $\tilde\bL^i(n)$ as $\bL^i(n)$ with $\btheta'\in\tilde\Alt^i(\hat A(n))$.
    The key that makes the proposed test suitable for the shared-parameter structure is the definition of these two maximum joint likelihoods in \eqref{genl}, or more precisely, the recognition of these two alternative subsets.
    This step is trivial in the unstructured setting, but becomes essential when the joint parameter has certain structure.
    Moreover, the same idea is readily extendable to other structures beyond the one considered in this work.
\end{remark}}

\subsection{Description}
Suppressing dependence on the two thresholds $a,b>0$, the proposed test $(\hat T,\hat D)$ is defined as:
\begin{equation} \label{the proposed test} 
\begin{aligned}
    \hat T & \equiv \inf\left\{n\geq 1: \frac{\hat\bL(n)}{\bL^0(n)} \geq b  \text{ and } \frac{\hat\bL(n)}{\bL^1(n)} \geq a \right\},
\end{aligned}
\end{equation}
and $\hat D \equiv \hat A(\hat T)$,
i.e., 
we sample until both $\hat\bL(n)/\bL^0(n)\geq b$ and $\hat\bL(n)/\bL^1(n)\geq a$, at which time we declare $\hat A(\hat T)$ as the subset of signals.
Intuitively,
$\hat\bL(n)/\bL^0(n)$ (resp. $\hat\bL(n)/\bL^1(n)$) represents evidence against all parameters under which we will make type-II (resp. I) errors if we decide on $\hat A(n)$. 
Therefore, when $\hat\bL(n)/\bL^0(n)$ (resp. $\hat\bL(n)/\bL^1(n)$) is large enough, deciding on $\hat A(n)$ is safe against type-II (resp. I) errors.

{\color{blue}
\begin{remark}
    Similarly, the state-of-the-art test for the unstructured setting, the Intersection rule in \cite{Song_prior, PaperIII}, takes the following form:
    \begin{equation} \label{the SOTA test} 
    \begin{aligned}
        \tilde T & \equiv \inf\left\{n\geq 1: \frac{\hat\bL(n)}{\tilde\bL^0(n)} \geq b  \text{ and } \frac{\hat\bL(n)}{\tilde\bL^1(n)} \geq a \right\}, 
    \end{aligned}
    \end{equation}
    and $\tilde D \equiv \hat A(\tilde T)$.
    Properties of this test similar to Theorem \ref{theorem, error control} and \eqref{theorem, UB} below are presented in Section \ref{supplement: section, properties of the SOTA test} of the supplement.
\end{remark}}

\subsection{Error control and choice of thresholds}
In this subsection, we show how to select the thresholds of the proposed test, $a$ and $b$, to control the two types of familywise error rates below desired levels.
\begin{theorem} \label{theorem, error control}
    Suppose that, for every $\btheta\in\cS$ and $i\in\{0,1\}$,
    \begin{equation} \label{assumption for error control}
        \Pro_\btheta\left( \lim_{n\to\infty}
        \frac{\hat\bL(n)}{\bL^i(n)} 
        = \infty \right) = 1.
    \end{equation}
    Then, for every $\btheta\in\cS$ and $a,b>0$, we have $\Pro_\btheta(\hat T<\infty)=1$ and 
    \begin{equation*} 
        \Pro_\btheta(\hat D\backslash A(\btheta)\neq\emptyset) \leq 1/a, \quad 
        \Pro_\btheta(A(\btheta)\backslash \hat D\neq\emptyset) \leq 1/b.
    \end{equation*}
    Thus, $(\hat T,\hat D)\in\Delta(\alpha,\beta)$ if we select $a=1/\alpha$, $b=1/\beta$.
\end{theorem}
    

Condition \eqref{assumption for error control} is very mild, which requires that the evidence in favor of the truth and against all wrong ones accumulates without bound as the sample size increases.

\subsection{Asymptotic upper bound and asymptotic optimality}

In this subsection, we establish an asymptotic upper bound on the expected sample size of the proposed test as $a,b\to\infty$, which, combined with the asymptotic lower bound in Theorem~\ref{theorem, LB} and the error control in Theorem~\ref{theorem, error control}, implies the asymptotic optimality.
\begin{theorem} \label{theorem, UB}
    Suppose that, for every $\btheta\in\cS$, $\epsilon>0$ and $i\in\{0,1\}$,
    \begin{equation} \label{assumption for AUB}
    \begin{aligned}
        & \sum_{n=1}^\infty \Pro_\btheta\left( \frac{1}{n} \log \frac{\hat\bL(n)}{\bL^i(n)} - I^i(\btheta) \leq \epsilon \right) < \infty.
    \end{aligned}
    \end{equation}
    Then, for every $\btheta\in\cS$, as $a,b\to\infty$, we have
    \begin{equation*} \label{asy UB}
        \Exp_\btheta[\hat T] \lesssim \max\left\{ \frac{\log b}{I^0(\btheta)}, \frac{\log a}{I^1(\btheta)} \right\}.
    \end{equation*}
\end{theorem}
\begin{remark} \label{remark for exponential family}
    Condition \eqref{assumption for AUB} strengthens condition \eqref{assumption for error control}. 
    As shown in the supplement, condition \eqref{assumption for AUB} is satisfied when $\{f_\theta,\,\theta\in\Theta\}$ belongs to an exponential family and the parameter spaces $\Theta^0,\Theta^1$ are compact.
    The exponential family distributions include most of commonly used distributions such as Gaussian, Uniform, Bernoulli, Poisson, etc.
    For more details, we refer to \citep[Appendix E]{Song_AoS}.
\end{remark}

\begin{corollary}
    If the thresholds of the proposed test, $a,b$, are selected so that $(\hat T,\hat D)\in\Delta(\alpha,\beta)$ for any $\alpha,\beta\in(0,1)$ and $a\sim|\log\alpha|, b\sim|\log\beta|$ as $\alpha,\beta\to 0$, e.g., as in Theorem \ref{theorem, error control}, then, for every $\btheta\in\cS$, as $\alpha,\beta\to 0$, we have
    \begin{equation} \label{asy approximation}                  
        \Exp_\btheta[\hat T] \sim \cL_\btheta(\alpha,\beta) \sim \max\left\{ \frac{|\log\beta|}{I^0(\btheta)}, \frac{|\log\alpha|}{I^1(\btheta)} \right\}.
    \end{equation}
\end{corollary}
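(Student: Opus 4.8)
The plan is to obtain \eqref{asy approximation} as a pure squeeze argument that welds together the three preceding results: the feasibility of the proposed test (Theorem~\ref{theorem, error control}), its asymptotic upper bound (Theorem~\ref{theorem, UB}), and the universal asymptotic lower bound (Theorem~\ref{theorem, LB}). Fix $\btheta\in\bTheta$ and abbreviate the target quantity by $J(\btheta)\equiv\max\{|\log\beta|/I^0(\btheta),\,|\log\alpha|/I^1(\btheta)\}$. The first step is to note that, since the thresholds are chosen so that $(\hat T,\hat D)\in\Delta(\alpha,\beta)$, the proposed test is a feasible competitor in the infimum defining $\cL_\btheta(\alpha,\beta)$; hence $\cL_\btheta(\alpha,\beta)\leq\Exp_\btheta[\hat T]$ for all $\alpha,\beta\in(0,1)$. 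This single inequality is what lets the optimality constant be pinned down from above by the test's performance.

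Next I would feed the threshold scaling into the upper bound. Theorem~\ref{theorem, UB} gives $\Exp_\btheta[\hat T]\lesssim\max\{\log b/I^0(\btheta),\,\log a/I^1(\btheta)\}$ as $a,b\to\infty$, and the hypotheses of the corollary impose $\log a\sim|\log\alpha|$ and $\log b\sim|\log\beta|$ (for the canonical choice $a=1/\alpha$, $b=1/\beta$ these even hold with equality). Substituting, the right-hand side is asymptotically $J(\btheta)$, so $\Exp_\btheta[\hat T]\lesssim J(\btheta)$. Pairing this with the matching asymptotic lower bound \eqref{asy LB}, namely $\cL_\btheta(\alpha,\beta)\gtrsim J(\btheta)$, and with the feasibility inequality above, chaining yields the sandwich $J(\btheta)\lesssim\cL_\btheta(\alpha,\beta)\leq\Exp_\btheta[\hat T]\lesssim J(\btheta)$, which forces every term in the chain to be $\sim J(\btheta)$ — precisely the claim.

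The only point deserving explicit verification is that the maximum operation respects the asymptotic equivalences, i.e., that $\log a\sim|\log\alpha|$ and $\log b\sim|\log\beta|$ imply $\max\{\log b/I^0(\btheta),\,\log a/I^1(\btheta)\}\sim J(\btheta)$. This is true because $I^0(\btheta)$ and $I^1(\btheta)$ are fixed positive constants depending only on $\btheta$: for any $\epsilon>0$ and all sufficiently small $\alpha,\beta$, each of the two quotients is squeezed between $(1-\epsilon)$ and $(1+\epsilon)$ times its target, and these uniform multiplicative bounds survive the maximum, so the ratio of the two maxima tends to one. I expect no genuine obstacle here; this is essentially a bookkeeping corollary whose entire content is that the leading constants $I^0(\btheta),I^1(\btheta)$ in the upper and lower bounds coincide, a matching that was already arranged in Theorems~\ref{theorem, LB} and~\ref{theorem, UB}.
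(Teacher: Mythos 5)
Your proposal is correct and is essentially the paper's own argument: the paper offers no separate proof for the corollary precisely because it follows by the same chaining you describe — feasibility gives $\cL_\btheta(\alpha,\beta)\leq\Exp_\btheta[\hat T]$, Theorem~\ref{theorem, UB} with $\log a\sim|\log\alpha|$, $\log b\sim|\log\beta|$ bounds the right end, and \eqref{asy LB} bounds the left end, forcing all quantities to be asymptotically equivalent to $\max\{|\log\beta|/I^0(\btheta),\,|\log\alpha|/I^1(\btheta)\}$. Your closing verification that the maximum respects the equivalences (because $I^0(\btheta)$ and $I^1(\btheta)$ are fixed positive constants) is the right, if routine, point to check, and you correctly read the stated threshold condition as $\log a\sim|\log\alpha|$, $\log b\sim|\log\beta|$, which is what the canonical choice $a=1/\alpha$, $b=1/\beta$ of Theorem~\ref{theorem, error control} delivers.
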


\section{Numerical studies} \label{section: Numerical study}
{\color{blue}
In this section, we present numerical studies that compare the expected sample sizes (ESS) of the proposed test in \eqref{the proposed test} and the Intersection rule in \eqref{the SOTA test}, \emph{with the same levels of familywise error rates (Err)}.
Specifically, the ESS are estimated via plain Monte-Carlo, and the Err, as probabilities of rare events, via importance sampling.
Due to page limits, how importance sampling is conducted and extra numerical studies about deviation from the shared-parameter structure, are presented in Section \ref{supplement: section, numerical studies} of the supplement.


We assume that $f_\theta$ represents Gaussian distribution with mean $\theta$ and variance $1$, and $\Theta^0=(-\infty,-\delta]$, $\Theta^1=[\delta,+\infty)$ for some $\delta>0$.
We set $K=10$, $\delta=0.1$, and $\btheta$ such that $A(\btheta)=\{1,\ldots,5\}$, $\theta^0=-0.5$ and $\theta^1=0.5$, which give $I^0(\btheta)=I^1(\btheta)=5/12$ and $\tilde I^0(\btheta)=\tilde I^1(\btheta)=0.18$.
The ESS and their asymptotic approximations given by $-\operatorname{log}(\operatorname{Err})$ over $I^0(\btheta)$ or $\tilde I^0(\btheta)$ are shown in Fig.~\ref{Fig}.
We can see that the proposed test outperforms the Intersection rule uniformly, and the asymptotic approximations fit the actual performance well.
}

\begin{figure}
    \centering
    \includegraphics[width=0.85\linewidth]{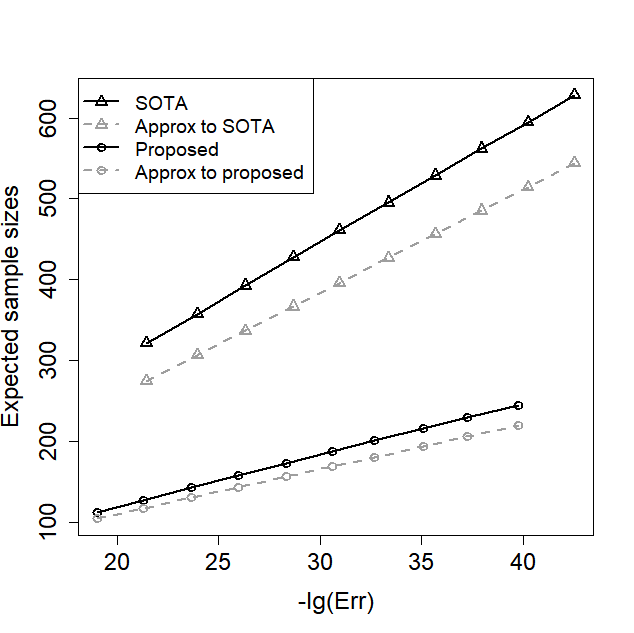}
    \caption{
    Expected sample sizes of the proposed test and the Intersection rule, against negative base-$10$ logarithm of their actual familywise error rates.
    The dashed gray lines represent their asymptotic approximations.
    }
    \label{Fig}
\end{figure}

\section{Future directions} \label{section: conclusion}
In this paper, we consider the full sampling setup where all streams are observed at every time instant. It is of interest to consider the active sampling setup where only part of the streams can be observed, which need to be chosen adaptively based on the previous data. 
This setup has been studied recently in various sequential settings 
\citep{Kobi_2020_composite, Qunzhi_2021TIT, Aris_IEEE, song2024change, chaudhuri2024round, Aris_TIT2025, JMVA2026, TSP2026},
and is closely related to the problem of multi-arm bandits pure exploration with fixed-confidence \citep{OAI, Overlapping2021, Ali2023, Ali2025}.
Another interesting direction is to consider more than two hypotheses and other structural assumptions on the joint distribution \citep{ICASSP2026}.

\appendices

\begin{center}
{\large\bf Supplementary Material to ``Signal Detection under Composite Hypotheses with Identical Distributions for Signals and for Noises"}
\end{center}

This supplement contains an extension to error metrics other than the familywise error rates (FWER), proofs that are omitted in the main text, and extra numerical studies. 

\section{Extension to other error metrics}
In this subsection, we follow the approach in \cite{Bartroff_2021_EquivErrorMetrics} and \citep[Section 7]{PaperIII} and extend the asymptotic optimality theory to error metrics other than the FWER. 
Specifically, we denote by 
\begin{equation*}
\begin{aligned}
    \operatorname{FWER}_\btheta^1(T,D) & \equiv \Pro_\btheta(D\backslash A(\btheta)\neq\emptyset), \\
    \operatorname{FWER}_\btheta^2(T,D) & \equiv \Pro_\btheta(A(\btheta)\backslash D\neq\emptyset),
\end{aligned}
\end{equation*}
the two types of FWER of the test $(T,D)$ when the true parameter value is $\btheta$, and by 
\begin{equation*}
    \operatorname{GEM}_\btheta^1(T,D), \quad 
    \operatorname{GEM}_\btheta^2(T,D),
\end{equation*}
the generic notations of a error metric (GEM) with two types.
For any $\alpha,\beta>0$, we denote by 
\begin{equation*}
\begin{gathered}
    \Delta^{\text{GEM}}(\alpha,\beta) \equiv \{
    (T,D)\in\Delta: \;
    \Pro_\btheta(T<\infty) = 1, \\
    \operatorname{GEM}_\btheta^1(T,D) \leq \alpha, \; 
    \operatorname{GEM}_\btheta^2(T,D) \leq \beta, \; \forall\; \btheta\in\cS
    \}
\end{gathered}
\end{equation*}
the subfamily of tests that terminate almost surely and control the two types of GEMs below $\alpha,\beta$ respectively, and by 
\begin{equation*}
    \cL_\btheta^{\text{GEM}}(\alpha,\beta) \equiv \inf\{ \Exp_\btheta[T]: (T,D)\in\Delta(\alpha,\beta) \}, \; \forall\;\btheta\in\cS
\end{equation*}
the minimum expected sample size among all such tests. 
Then, we have the following proposition:
\begin{proposition} \label{proposition, GEM}
    Suppose there exist constants $C_1,C_2>0$ so that, for every $i\in\{1,2\}$ and $\btheta\in\cS$,
    \begin{align}
        \operatorname{GEM}_\btheta^i(\hat T,\hat D) & \leq C_1 \cdot \operatorname{FWER}_\btheta^i(\hat T,\hat D), \label{GEM} \\
        \operatorname{GEM}_\btheta^i(T,D) & \geq C_2 \cdot \operatorname{FWER}_\btheta^i(T,D) \text{ for all } (T,D)\in\Delta, \nonumber
    \end{align}
    where $(\hat T,\hat D)$ represents the proposed test. 
    Then, by selecting $a=C_1/\alpha$ and $b=C_2/\beta$ for any $\alpha,\beta>0$, we have $(\hat T,\hat D)\in\Delta^{\text{GEM}}(\alpha,\beta)$ for all $\alpha,\beta>0$ and
    \begin{equation*}
        \Exp_\btheta[\hat T] \sim \cL_\btheta^{\text{FWER}}(\alpha,\beta) \sim \cL_\btheta^{\text{GEM}}(\alpha,\beta)
    \end{equation*}
    as $\alpha,\beta\to 0$.
\end{proposition}
\begin{proof} [Proof of Proposition \ref{proposition, GEM}]
    Let $(\hat T,\hat D)(a,b)$ denote the proposed test with thresholds $a,b$. 
    Then, Theorem \ref{theorem, error control} of the main text and the first line of condition \eqref{GEM} imply 
    \begin{equation*}
        (\hat T,\hat D)(C_1/\alpha,C_1/\beta)\in \Delta^{\text{FWER}}(\alpha/C_1,\beta/C_1) \subseteq \Delta^{\text{GEM}}(\alpha,\beta).
    \end{equation*}
    
    Besides, the second line of condition \eqref{GEM} implies $\Delta^{\text{GEM}}(\alpha,\beta)\subseteq\Delta^{\text{FWER}}(\alpha/C_2,\beta/C_2)$, so
    \begin{equation*}
    \begin{aligned}
        \cL^{\text{GEM}}(\alpha,\beta) & \geq \cL_\btheta^{\text{FWER}}(\alpha/C_2,\beta/C_2) \sim \cL_\btheta^{\text{FWER}}(\alpha,\beta) \\
        & \sim \Exp_\btheta[\hat T(C_1/\alpha,C_1/\beta)] \geq \cL^{\text{GEM}}(\alpha,\beta),
    \end{aligned}
    \end{equation*}
    where the two $\sim$'s are based on Theorem \ref{theorem, error control} and the specific expressions of $\cL^{\text{FWER}}(\cdot,\cdot)$, and the second $\geq$ is because $(\hat T,\hat D)(C_1/\alpha,C_1/\beta)\in\Delta^{\text{GEM}}(\alpha,\beta)$.
\end{proof}
This proposition implies that, as long as GEM are upper (resp. lower) bounded by constant multiples of FWER for the proposed test (resp. for all tests),
we can select the thresholds of the proposed test in a very straightforward way to control GEM and achieve the same asymptotic optimality property with respect to GEM. 

Here are two examples for GEM:
Per-stream error rates (PSER):
\begin{equation*}
\begin{aligned}
    \operatorname{PSER}_\btheta^1(T,D) & \equiv \Exp_\btheta[|D\backslash A(\btheta)|]/K, \\
    \operatorname{PSER}_\btheta^2(T,D) & \equiv \Exp_\btheta[|A(\btheta)\backslash D|]/K,
\end{aligned}
\end{equation*}
and false discovery rates (FDR):
\begin{equation*}
\begin{aligned}
    \operatorname{FDR}_\btheta^1(T,D) & \equiv \Exp_\btheta[|D\backslash A(\btheta)|/|D|], \\
    \operatorname{FDR}_\btheta^2(T,D) & \equiv \Exp_\btheta[|A(\btheta)\backslash D|/(K-|D|)].
\end{aligned}
\end{equation*}
Since
\begin{equation*}
\begin{aligned}
    \frac{1}{K}1\{D\backslash A(\btheta)\neq\emptyset\} & \leq \frac{|D\backslash A(\btheta)|}{K} \\
    & \leq \frac{|D\backslash A(\btheta)|}{|D|} \leq 1\{D\backslash A(\btheta)\neq\emptyset\},   
\end{aligned}
\end{equation*}
we have 
\begin{equation*}
\begin{aligned}
    \frac{1}{K} \operatorname{FWER}_\btheta^1(T,D) & \leq \operatorname{PSER}_\btheta^1(T,D) \\
    & \leq \operatorname{FDR}_\btheta^1(T,D) \leq \operatorname{FWER}_\btheta^1(T,D).    
\end{aligned}
\end{equation*}
Similarly for the other type. 
Therefore, this proposition applies to PSER and FDR with $C_1=1$ and $C_2=1/K$.

\section{Proofs}
\begin{proof} [Proof of Theorem \ref{theorem, LB}]
    Fix arbitrary $\btheta\in\cS$, $\alpha,\beta\in(0,1/2)$ that $\alpha+\beta<1/2$, and $(T,D)\in\Delta(\alpha,\beta)$.
    Also fix arbitrary $\btheta'\in\cS$ that $\btheta\neq\btheta'$.
    By Wald's identity, 
    \begin{equation*}
    \begin{aligned}
        & \; \Exp_\btheta\left[ \sum_{t=1}^T \log \frac{f_{\btheta}(X_1(t),\ldots,X_K(t))}{f_{\btheta'}(X_1(t),\ldots,X_K(t))} \right] \\
        = & \; \Exp_\btheta[T] \, \Exp_\btheta\left[ \log \frac{f_{\btheta}(X_1(1),\ldots,X_K(1))}{f_{\btheta'}(X_1(1),\ldots,X_K(1))} \right]
        = \Exp_\btheta[T] \, I(\btheta,\btheta').
    \end{aligned}
    \end{equation*}
    Meanwhile, by the information-theoretical inequality (see, e.g., Lemma 3.2.1 of \cite{Tartakovsky_Book}) and the fact that $D$ is $\cF(T)$-measurable, the left-hand-side of the above equation is lower bounded by 
    \begin{equation*}
    \begin{aligned}
        & \; \varphi\left( \Pro_\btheta(D\neq A(\btheta)), \Pro_{\btheta'}(D=A(\btheta)) \right) \\
        \geq & \; \varphi(\alpha+\beta, \Pro_{\btheta'}(D=A(\btheta))), 
    \end{aligned}
    \end{equation*}
    where we also used the fact that $\varphi(\cdot,\cdot)$ is decreasing in both arguments and that $D\neq A(\btheta)$ makes at least one error when the joint parameter is $\btheta$.
    
    Now, if $\btheta'\in\cS$ so that $A(\btheta')\backslash A(\btheta)\neq\emptyset$, we also have 
    $\Pro_{\btheta'}(D=A(\btheta))\leq\beta$ since $D=A(\btheta)$ makes at least one type-II error when the joint parameter is $\btheta'$. 
    Therefore, 
    $$\varphi(\alpha+\beta, \Pro_{\btheta'}(D=A(\btheta))) \geq \varphi(\alpha+\beta,\beta).$$
    Combining with the above results,
    it follows that 
    $$\Exp_\btheta[T] \geq \varphi(\alpha+\beta,\beta)/I(\btheta,\btheta').$$
    Since this holds for all such $\btheta'\in\cS$, we obtain the first term in the desired lower bound.
    The second term can be obtained analogously and thus is omitted.
\end{proof}

\begin{proof} [Proof of Theorem \ref{theorem, error control}]
    Fix arbitrary $\btheta\in\cS$ and $a,b>0$.
    Condition \eqref{assumption for error control} implies that 
    \begin{equation*}
        \Pro_\btheta\left( \lim_{n\to\infty}\frac{\hat\bL(n)}{\bL^i(n)} = \infty \text{ for } i\in\{0,1\} \right) = 1.
    \end{equation*}
    Thus, $\Pro_\btheta(\hat T<\infty)=1$.
    
    Next, we only show the upper bound on the familywise error rate of type-I, as that of type-II can be shown similarly.
    Indeed, 
    \begin{equation*}
    \begin{aligned}
        \{\hat D\backslash A(\btheta)\neq\emptyset\} 
        & \subseteq \left\{\frac{\hat\bL(T)}{\bL^1(T)}\geq a, \; \bL^1(T) \geq \bL(T;\btheta)\right\} \\
        & \subseteq \left\{ \exists\;n\geq 1, \, \frac{\hat\bL(n)}{\bL(n;\btheta)} \geq a \right\}.
    \end{aligned}
    \end{equation*}
    From Lemma \ref{lemma, martingale} of this supplement, we know that $\{\hat\bL(n)/\bL(n;\btheta),\;n\geq 1\}$ is a non-negative, mean-one martingale under $\Pro_\btheta$. Then the desired result follows from Ville's inequality.
\end{proof}

\begin{proof} [Proof of Theorem \ref{theorem, UB}]
    With condition \eqref{assumption for AUB}, applying Lemma \ref{lemma, for AUB} of this supplement
    to the stopping time in \eqref{the proposed test}, the asymptotic upper bound follows. 
\end{proof}

\begin{proof} [Proof of Remark \ref{remark for exponential family}]
    We only prove \eqref{assumption for AUB} for $i=0$ as that for $i=1$ is analogous.
    Indeed, each summand is equal to 
    \begin{equation*} 
    \begin{aligned} 
        & \; \Pro_\btheta\left(\frac{1}{n} \log \frac{\hat\bL(n)}{\bL^0(n)} - I^0(\btheta) \leq \epsilon, \; \hat A(n)=A(\btheta)\right) + \\
        & \; \Pro_\btheta\left(\frac{1}{n} \log \frac{\hat\bL(n)}{\bL^0(n)} - I^0(\btheta) \leq \epsilon, \; \hat A(n)\neq A(\btheta)\right) \\
        \leq & \; \Pro_\btheta\left(\frac{1}{n} \log \frac{\hat\bL(n)}{\bL^0(n;A(\btheta))} - I^0(\btheta) \leq \epsilon\right) + \\
        & \; \Pro_\btheta\left(\hat A(n)\neq A(\btheta)\right),
    \end{aligned}
    \end{equation*}    
    where, for simplicity, we denote 
    \begin{equation*}
        \bL^0(n;A(\btheta)) \equiv \sup\left\{ \bL(n;\btheta'): \btheta'\in\cS, \; A(\btheta')\backslash A(\btheta) \neq\emptyset \right\}.
    \end{equation*}
    The summation of the first term is finite based on \citep[Lemma E.1]{Song_AoS} since $\{\btheta'\in\cS: A(\btheta')=A(\btheta)\}$ and $\{\btheta'\in\cS: A(\btheta')\backslash A(\btheta) \neq\emptyset\}$ are two disjoint, compact sets,   
    and the summation of the second term is also finite following a similar reasoning as ``Step 2" in the proof of \citep[Theorem 2]{Kobi_2020_composite}.
\end{proof}

\section{Supporting lemmas}
\begin{lemma} \label{lemma, martingale}
    For every $\btheta\in\cS$, 
    \begin{equation*}
        \left\{ \frac{\hat \bL(n)}{\bL(n;\btheta)},\;n\geq 1 \right\}
    \end{equation*}
    is a non-negative, mean-one martingale with respect to filtration $\cF$ under measure $\Pro_\btheta$.
\end{lemma}
\begin{proof} [Proof of Lemma \ref{lemma, martingale}]
    By definition, 
    \begin{equation*}
        \frac{\hat\bL(n)}{\bL(n;\btheta)} = \frac{\hat\bL(n-1)}{\bL(n-1;\btheta)}\frac{\prod_{k\in[K]} f_{\hat\theta_k(n-1)}(X_k(n))}{\prod_{k\in[K]} f_{\theta_k}(X_k(n))}
    \end{equation*}
    for $n\geq 1$,
    with $\hat\bL(0)/\bL(0;\btheta)=1$, $\{\hat\theta_k(0),\,k\in[K]\}$ deterministic and $\{\hat\theta_k(n-1),\,k\in[K]\}$ $\cF(n-1)$-measurable.
    It is easy to see that this stochastic process is non-negative and adapted to filtration $\cF$.
    To see it is a martingale under $\Pro_\btheta$, note that, since $\{\hat\theta_k(n-1),\,k\in[K]\}$ are $\cF(n-1)$-measurable,
    \begin{equation*}
    \begin{aligned}
        & \; \Exp_\btheta\left[\left. \frac{\prod_{k\in[K]} f_{\hat\theta_k(n-1)}(X_k(n))}{\prod_{k\in[K]} f_{\theta_k}(X_k(n))} \right|\cF(n-1)\right] \\
        = & \; \int \frac{\prod_{k\in[K]} f_{\hat\theta_k(n-1)}(z_k)}{\prod_{k\in[K]} f_{\theta_k}(z_k)} \, \prod_{k\in[K]} f_{\theta_k}(z_k) \, (d\nu)^K = 1.
    \end{aligned}
    \end{equation*}
    To see it has mean one, note that, since $\{\hat\theta_k(0),\,k\in[K]\}$ are deterministic, 
    \begin{equation*}
    \begin{aligned}
        & \; \Exp_\btheta\left[ \frac{\hat\bL(1)}{\hat\bL(1;\btheta)} \right] \\
        = & \; \int \frac{\prod_{k\in[K]} f_{\hat\theta_k(0)}(z_k)}{\prod_{k\in[K]} f_{\theta_k}(z_k)} \, \prod_{k\in[K]} f_{\theta_k}(z_k) \, (d\nu)^K = 1.
    \end{aligned}
    \end{equation*}
\end{proof}

\begin{lemma} \label{lemma, for AUB}
    Let $\{\xi_k(n),\,n\geq 1\}$, $k\in[K]$ be $K\geq 1$ stochastic processes on some probability space with measure $\Pro$.
    For any $\bb=(b_1,\ldots,b_K)\in(0,\infty)^K$, define stopping time
    \begin{equation*}
        T(\bb) \equiv \inf\{n\geq 1: \xi_k(n)\geq b_k \text{ for all } k\in[K]\}.
    \end{equation*}
    If, for any $k\in[K]$, there exists $\mu_k>0$ so that for any $\epsilon>0$, 
    \begin{equation*}
        \sum_{n=1}^\infty \Pro\left( \frac{1}{n} \xi_k(n) - \mu_k \leq \epsilon \right) < \infty,
    \end{equation*}
    then
    \begin{equation*}
        \Exp[T(\bb)] \lesssim \max_{k\in[K]} \left\{\frac{b_k}{\mu_k}\right\} \text{ as } \min_{k\in[K]} b_k\to\infty.
    \end{equation*}
\end{lemma}
\begin{proof} [Proof of Lemma \ref{lemma, for AUB}]
    See \citep[Lemma F.2]{Song_AoS}.
\end{proof}

\section{Properties of the Intersection rule} \label{supplement: section, properties of the SOTA test}
In this section, we summarize the theoretical properties of the Intersection rule in \cite{Song_prior, PaperIII} that was designed for the problem of sequential multiple testing in the general, unstructured setting. This test has been studied in \cite{Song_prior, PaperIII} and the proofs of the following results can be found there.
This test is repeated in \eqref{supplement, the SOTA test} for convenience. 
\begin{theorem} \label{supplement: theorem, error control}
    Suppose that, for every $\btheta\in\Theta^K$ and $i\in\{0,1\}$,
    \begin{equation} \label{supplement: assumption for error control}
        \Pro_\btheta\left( \lim_{n\to\infty}
        \frac{\hat\bL(n)}{\tilde\bL^i(n)} 
        = \infty \right) = 1.
    \end{equation}
    Then, for every $\btheta\in\Theta^K$ and $a,b>0$, we have $\Pro_\btheta(\tilde T<\infty)=1$ and 
    \begin{equation*} 
        \Pro_\btheta(\tilde D\backslash A(\btheta)\neq\emptyset) \leq 1/a, \quad 
        \Pro_\btheta(A(\btheta)\backslash \tilde D\neq\emptyset) \leq 1/b.
    \end{equation*}
    Thus, $(\tilde T,\tilde D)\in\tilde\Delta(\alpha,\beta) \subseteq \Delta(\alpha,\beta)$ if we select $a=1/\alpha$, $b=1/\beta$.
\end{theorem}

\begin{theorem} \label{supplement: theorem, UB}
    Suppose that, for every $\btheta\in\Theta^K$, $\epsilon>0$ and $i\in\{0,1\}$,
    \begin{equation} \label{supplement: assumption for AUB}
    \begin{aligned}
        & \sum_{n=1}^\infty \Pro_\btheta\left( \frac{1}{n} \log \frac{\hat\bL(n)}{\tilde\bL^i(n)} - \tilde I^i(\btheta) \leq \epsilon \right) < \infty.
    \end{aligned}
    \end{equation}
    Then, for every $\btheta\in\Theta^K$, as $a,b\to\infty$, we have
    \begin{equation} \label{supplement: asy UB}
        \Exp_\btheta[\tilde T] \sim \max\left\{ \frac{\log b}{\tilde I^0(\btheta)}, \frac{\log a}{\tilde I^1(\btheta)} \right\}.
    \end{equation}
\end{theorem}
\begin{remark}
    Compared with Theorem \ref{theorem, error control} and \ref{theorem, UB} in the main text, both the assumptions and the conclusions are changed from ``for all $\btheta\in\cS$" to ``for all $\btheta\in\Theta^K$".  
    However, the asymptotic approximation for Intersection rule in \eqref{supplement: asy UB} is always greater than or equal to the one for the proposed test in \eqref{asy approximation}, since $\tilde I^0(\btheta)\leq I^0(\btheta)$ and $\tilde I^1(\btheta)\leq I^1(\btheta)$.
\end{remark}

\section{Numerical studies} \label{supplement: section, numerical studies}
In this section, we first discuss how we use importance sampling to estimate the actual familywise error rates of the two tests, 
and then visualize how the two tests perform under model-misspecification. 

Before starting, we restate the setup and the definition of the two tests:
Suppose that $f_\theta$ represents Gaussian distribution with mean $\theta$ and variance $1$, and $\Theta^0=(-\infty,-\delta]$, $\Theta^1=[\delta,+\infty)$ for some $\delta>0$, 
i.e., we are testing whether the Gaussian mean is negative or positive with indifference zone $(-\delta,\delta)$.
We set $K=10$, $\delta=0.1$, and $\btheta$ such that $A(\btheta)=\{1,\ldots,5\}$, $\theta^0=-0.5$ and $\theta^1=0.5$, which give $I^0(\btheta)=I^1(\btheta)=5/12$ and $\tilde I^0(\btheta)=\tilde I^1(\btheta)=0.18$.
The proposed test is defined as 
\begin{equation} \label{supplement, the proposed test} 
\begin{aligned}
    \hat T & \equiv \inf\left\{n\geq 1: \frac{\hat\bL(n)}{\bL^0(n)} \geq b  \text{ and } \frac{\hat\bL(n)}{\bL^1(n)} \geq a \right\}, \\
    \hat D & \equiv \hat A(\hat T),
\end{aligned}
\end{equation}
and the Intersection rule in \cite{Song_prior, PaperIII} as 
\begin{equation} \label{supplement, the SOTA test} 
\begin{aligned}
    \tilde T & \equiv \inf\left\{n\geq 1: \frac{\hat\bL(n)}{\tilde\bL^0(n)} \geq b  \text{ and } \frac{\hat\bL(n)}{\tilde\bL^1(n)} \geq a \right\}, \\
    \tilde D & \equiv \hat A(\tilde T),
\end{aligned}
\end{equation}
where $\bL^i(n)$'s (resp. $\tilde\bL^i(n)$'s) represent the maximum joint likelihoods among wrong parameters with respect to $\hat A(n)$ in the structured space $\cS$ (resp. unstructured space $\Theta^K$).

\subsection{Importance sampling}
We focus on the estimation of the actual familywise type-II error rate, which is defined as
\begin{equation*}
    \Err \equiv \max_{\btheta\in\cS} \Pro_\btheta(A(\btheta)\backslash D\neq\emptyset).
\end{equation*}
Based on the setup of our numerical studies, it is easy to see that the maximum is attained when all streams are signals and the signal parameter is on the boundary, i.e.,
\begin{equation*}
    \Err = \Pro_{\btheta^*}(A(\btheta^*)\backslash D\neq\emptyset) \text{ where } \btheta^*\equiv(\delta_1,\ldots,\delta_1).
\end{equation*}

For generality, we fix arbitrary $\btheta\in\cS$ such that $A(\btheta)\neq\emptyset$ and estimate $\Pro_\btheta(A(\btheta)\backslash D\neq\emptyset)$.
Since this probability is very small (exponential decaying in threshold), computing it based on plain Monte-Carlo is very inefficient. 
Indeed, in order to make the relative error of the estimate 
\begin{equation*}
    \frac{sd(\hat p)}{p} = \left. \sqrt{\frac{p(1-p)}{n}} \right/ p \approx \frac{1}{\sqrt{np}}
\end{equation*}
as small as $5\%$, the number of plain Monte-Carlo rounds needs to be as large as $400/p$, which is unacceptable when $p$ is very small (in our numerical studies, we estimate $p$ as small as $10^{-50}$).

In this case, importance-sampling is the right tool (see, e.g., \cite{Siegmumd_IS, Bucklew_Book, Song2025_IS}).
The basic idea is that, in order to estimate a very small $\Pro(\Gamma)$, we find another distribution $\Qro$ such that (i) $\Qro(\Gamma)$ is large and (ii) $\Qro$ is close to $\Pro$, where the former guarantees that we are able to collect enough effective observations and the latter controls the variance of the importance sampling weights.
Then, based on the following Wald's likelihood ratio identity:
\begin{equation*}
    \Pro(\Gamma) = \Exp_\Pro[1_\Gamma] = \Exp_\Qro\left[ \left(\frac{d\Qro}{d\Pro}\right)^{-1} 1_\Gamma \right],
\end{equation*}
we estimate $\Pro(\Gamma)$ by simulating $(d\Qro/d\Pro)^{-1}1_\Gamma$ under $\Qro$ for many rounds and taking the average.

Specifically in our problem, in order to estimate $\Pro_\btheta(A(\btheta)\backslash D\neq\emptyset)$, we adopt the importance sampling distribution that uniform-randomly picks a signal and changes its local parameter from $\theta^1$ to $\theta^0$, i.e.,
\begin{equation*}
    \frac{1}{|A(\btheta)|} \sum_{k\in A(\btheta)} \Pro_{\btheta(k)},
\end{equation*}
where $\btheta(k)$ represents $\btheta$ with the $\theta^1$ at the $k_{th}$ position replaced by $\theta^0$. 
If $A(\btheta)=[K]$ so that there is no $\theta^0$, e.g., $\btheta^*$, we change $\theta^1$ to the boundary value $-\delta$.
Based on this importance sampling technique,
all relative errors in our numerical studies are below $5\%$ with $10^4$ simulation rounds. 

\subsection{Model misspecification}
In Fig. \ref{Fig_misspe}, we set thresholds $\log a=\log b=20$ and plot the expected sample sizes of the proposed test and the Intersection rule when $\theta_1$, the local parameter of the first stream, takes value in $\{0.1,0.2,\ldots,1\}$, where the value consistent with the shared-parameter structure is $0.5$. 
First note that larger $\theta_1$ means easier problem, so the expected sample size of both tests should not increase as $\theta_1$ increases, which is corroborated by the result.
Besides, we can see that the ESS of the Intersection rule decreases as $\theta_1$ increases to $0.5$ and basically keeps constant afterwards. 
This is because that the sample size of the Intersection rule is mainly determined by the most difficult stream, which is stream $1$ when $\theta_1<0.5$ and other streams otherwise. 
On the contrary, the ESS of the proposed test keeps decreasing, even when $\theta_1>0.5$, is never too large, and is always below that of the Intersection rule, demonstrating satisfactory robustness against deviation from the shared-parameter structure. 
\begin{figure}
    \centering
    \includegraphics[width=0.48\linewidth]{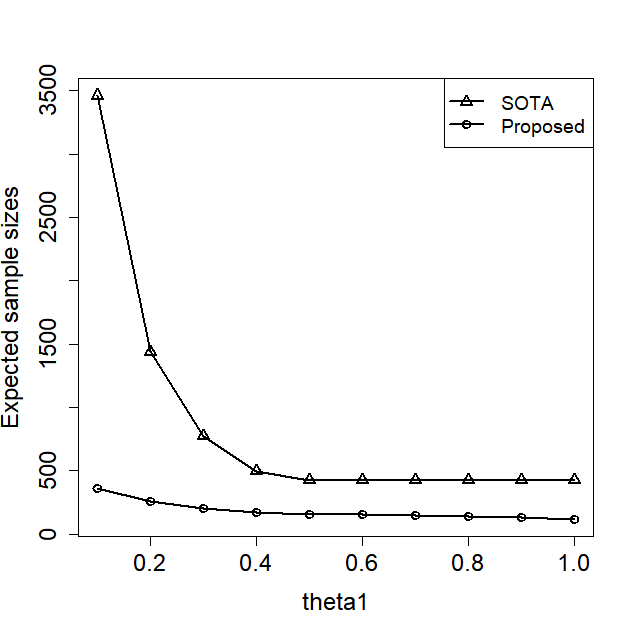}
    \includegraphics[width=0.48\linewidth]{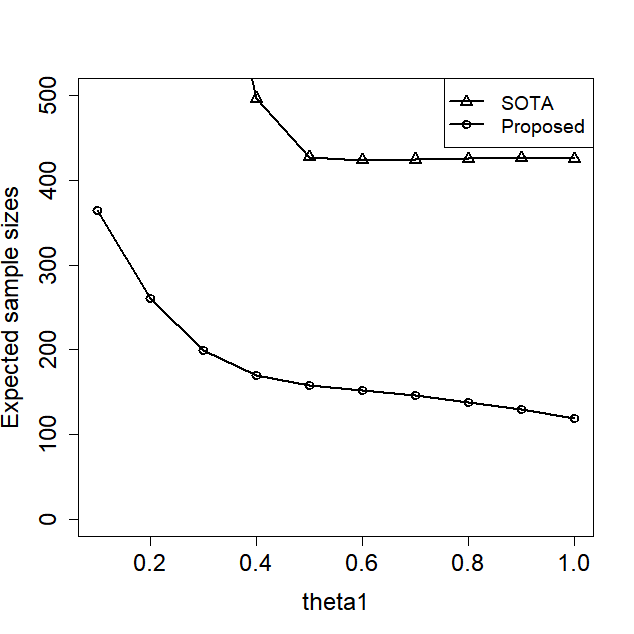}
    \caption{Expected sample sizes of the proposed test and the Intersection rule with thresholds $\log a = \log b = 20$, when $\theta_1$ ranges in $\{0.1,0.2,\ldots,1\}$. The right subfigure is the left subfigure when the y-axis is limited to $[0, 500]$.}
    \label{Fig_misspe}
\end{figure}

For the effects of stream number $K$ and class balance $|A(\btheta)|$ versus $K-|A(\btheta)|$, they are similar to those observed in other sequential multiple testing problems, so we refer to \citep[Supplement Section 1]{PaperIII} for details.


\bibliographystyle{chicago}
\bibliography{main}

\end{document}